 \documentclass[11pt]{article}

 \setlength{\oddsidemargin}{-0.05in}      
 \setlength{\evensidemargin}{-0.05in}     
 \setlength{\topmargin}{-0.45in}        
 \setlength{\textheight}{8.7in}           
 \setlength{\textwidth}{6.5in}          

 \title{Stratified Random Sampling for Dependent Inputs}

 \author{Anirban Mondal\\
 Case Western Reserve University, Cleveland, OH 44106, USA\\
 \and
  Abhijit Mandal\\
 Wayne State University, Detroit, MI 48202, USA}

 \usepackage[authoryear,comma]{natbib}
 \bibliographystyle{apalike}

\usepackage{hyperref}


\usepackage{epsfig, amsmath, amsfonts, amssymb, amsthm,multirow,algorithm,algorithmic}
\usepackage{graphicx}
\usepackage{setspace}

\usepackage{enumerate}
\doublespacing
\usepackage{hyperref}
\hypersetup{citecolor=blue,colorlinks=true}



%

\theoremstyle{plain}

\newcommand{\beq}{\begin{equation}}
\newcommand{\eeq}{\end{equation}}
\newcommand{\ber}{\begin{eqnarray}}
\newcommand{\eer}{\end{eqnarray}}

\newcommand{\mf}{\mathbf}

\newcommand{\bit}{\begin{itemize}}
\newcommand{\eit}{\end{itemize}}
\newcommand{\iid}{\stackrel{\mathrm{i.i.d.}}{\sim}}
\newcommand{\ben}{\begin{enumerate}}
\newcommand{\een}{\end{enumerate}}

\numberwithin{equation}{section}

 \newtheorem{theorem}{Theorem}

\begin{document}

%
\maketitle
%


%
%

\begin{abstract} 

 A new approach of obtaining stratified random samples from statistically dependent random variables is described. The proposed method can be used to obtain samples from the input space of a computer forward model in estimating expectations of functions of the corresponding output variables. The advantage of the proposed method over the existing methods is that it preserves the exact form of the joint distribution on the input variables. The asymptotic distribution of the new estimator is derived.  Asymptotically, the variance of the estimator using the proposed method is less than that obtained using the simple random sampling, with the degree of variance reduction depending on the degree of additivity in the function being integrated. This technique is applied to a practical example related to the performance of the river flood inundation model.

\end{abstract}

\noindent{\textbf{Keywords}}: Stratified sampling; Latin hypercube sampling; Variance reduction; Sampling with dependent random variables; Monte Carlo simulation.




\section{Introduction}
Mathematical models are widely used by engineers and scientists to describe physical, economic, and social processes. Often these models are complex in nature and are described by a system of ordinary or partial differential equations, which cannot be solved analytically. Given the values of the input parameters of the processes, complex computer codes are widely used to solve such systems numerically, providing the corresponding outputs. These computer models, also known as forward models, are used for prediction, uncertainty analysis, sensitivity analysis, and model calibration. For such analysis, the computer code is evaluated multiple times to estimate some function of the outputs. However, complex computer codes are often too expensive to be directly used to perform such studies. Design of computer experiment plays an important role in such situations, where the goal is to choose the input configuration in an optimal way such that the uncertainty and sensitivity analysis can be done with a few number of simulation runs. Moreover, to avoid the computation challenge, it is often useful to replace the computer model either by a simpler mathematical approximation called as surrogate or meta model (\citealp{volkova, simpson1}), or by a statistical approximation, called emulator (\citealp{oakley2002,ohagan2006,conti2009}). Such surrogate models or emulators are also widely used for the model calibration \citep{ken2001}, which is the inverse problem of inferring about the model input parameters given the outputs. The surrogate or the emulator model is built based on set of training simulations of the original computer model. The full optimal exploration  of the variation domain of the input variables is therefore very important in order to avoid non-informative simulation points \citep{fang,sacks1989}.

In most situations, the simple random sampling (SRS) needs very large number of samples to achieve a fixed level of efficiency in estimating some functions of the outputs of such complex models. The required sample size increases rapidly with an increase in dimension of the input space. On the other hand, the Latin hypercube sampling (LHS) introduced by \cite{mckay1} needs smaller sample sizes than the SRS to achieve the same level of efficiency. Compared to the SRS, which  only ensures independence between samples, the LHS also ensures the full coverage of the range of the input variables through stratification over the marginal probability distributions of the inputs. 
Thus, the LHS method is more efficient and robust if the components of the inputs are independently distributed. In case of dependent inputs \cite{iman} has proposed an approximate version of the LHS based on the rank correlation. \cite{stein1} further improved this procedure such that each sample vector has approximately the correct joint distribution when the sample size is large. However, it is noticed that in many scenarios, the rank-based method results in a large bias and small efficiency in the estimators, particularly for small sample sizes. Moreover, the  joint distribution of the inputs are also not completely preserved in these sampling schemes even for moderately large sample sizes. Therefore, these rank-based techniques are not very useful for many real applications where one is restricted to a small sample size due to the computational burden of the expensive forward simulator. To overcome this situation, we propose a novel sampling scheme for dependent inputs that precisely gives a random sample from the target joint distribution while keeping the mean squared error of the estimator smaller than the existing methods. In the traditional LHS, where the components of inputs are independent, the stratification of the marginal distributions leads to the stratification of the joint distribution. However, for dependent random variables, all the marginal distributions and the joint distribution cannot be stratified simultaneously. Here, we propose a new sampling scheme, called the Latin hypercube sampling for dependent random variables (LHSD), where we ensure that the conditional probability distributions of the inputs are stratified. The main algorithm of the LHSD is similar to the traditional LHS, hence it retains all important properties of LHS. The joint distribution of the inputs is preserved in our sampling scheme as it is precisely the product of the conditional distributions. 

In some practical situations the joint probability distribution of the inputs, and hence the corresponding conditional probability distributions may be unknown. In these situations, we propose  a copula-based method to construct the joint probability distribution from the marginal distributions. For finite sample, it is shown that the variance of the estimators based on LHSD is always smaller than the variance of the estimator based on the SRS. The large sample properties of the LHSD based estimators are also provided. We consider two simulation-based examples and one practical example, where the  results show that the traditional LHS and rank-based LHS method have considerable bias in the estimator when the inputs are dependent. Our proposed LHSD outperforms the SRS, the traditional LHS, and rank-based LHS in terms of the mean squared error (MSE), for small and moderately large sample sizes. The simulation results also show that the rank-based LHS fails to retain the joint probability distribution of the input variables even for moderately large sample sizes, which is the reason for the considerable bias in the estimators. On the other hand, the joint distribution is completely retained in our proposed sampling scheme and hence the estimators are unbiased. 

The paper is organized as follows. In the next section, we first formulate the estimation problem, then describe the LHS algorithm for independent inputs  \citep{mckay1} and our proposed LHSD algorithm for dependent inputs. Another variant of the proposed method, which will be called the centered Latin hypercube sampling for dependent random variables (LHSD$_c$), is also described here. The use of copula, when the joint probability distribution of the inputs is not known, is also discussed in this section. The large sample properties of the estimators using the LHSD are discussed in Section \ref{sec:large_sample}. Section \ref{sec:simulation} provides the numerical results from two simulation examples, where the performance of different sampling schemes are compared. The application of the proposed sampling scheme to a real field example on a river model is presented in Section \ref{sec:river}. A concluding remark is given at the end of the paper, and the proofs of the theorems are provided in Appendix.

\section{The Sampling Techniques} \label{sec:sampling}

Consider a process (or device) which depends on an input random vector $\mf{X}=(X_1, X_2, \cdots , X_K)^T$ of fixed dimension $K$. Suppose we want to estimate the expected value of some measure of the process output, given by a function  $h(\mathbf{X})$. As described before, in many cases, $h$ is highly non-linear and/or $K$ is very large, so the probability distribution of $h(\mathbf{X})$ is not analytically tractable even though the probability distribution of $\mathbf{X}$ is completely known. For example, $h$ could depend on the output of a physical process that involves a system of partial differential equations. In such cases, Monte Carlo methods are usually used to estimate  $\tau = E(h(\mathbf{X}))$, the expected value of $h(\mathbf{X})$. Suppose $\mathbf{X}_1, \mathbf{X}_2, \cdots, \mathbf{X}_N$ form a random sample of size $N$, generated from the joint distribution of $\mathbf{X}$ using a suitable sampling scheme, then $\tau$ is estimated by $\hat{\tau}=N^{-1}\sum_{j=1}^N h(\mathbf{X}_j)$. Since, in terms of time or other complexity, it is costly to compute $h(\mathbf{X})$, we are interested in a sampling scheme that estimates $\tau$ efficiently while keeping the sample size $N$ as small as possible.

\subsection{Latin Hypercube Sampling from Independent Random Variables} \label{LHS_ind}
Suppose ${\mathbf{X}}=(X_1, X_2, \cdots, X_K)^T$ is a random vector with $K$ components. Let $F$ be the joint cumulative density function (c.d.f.) of $\mathbf{X}$, and $F_k$ be the marginal distribution function of $X_k$, $k=1, 2, \cdots K$.  \cite{mckay1} proposed a sampling scheme to generate a Latin hypercube sample of size N from $\mathbf{X}$ assuming that  the components of $\mathbf{X}$ are mutually independent. The details of the LHS algorithm is given in Algorithm \ref{lhs}. 

\begin{algorithm}
\caption{LHS}
\label{lhs}
\begin{enumerate}
\item Generate $P=((p_{jk}))_{N \times K}$, a $N \times K$ random matrix, where each column of P is an independent random permutation of ${1,2,\cdots, N}$. 

\item Using the simple random sampling method generate $NK$ independent and identically distributed (i.i.d.) uniform random variables $u_{jk}$ over $[0,1]$, i.e., $u_{jk}\iid U(0,1)$ for  $ j=1,2,\cdots,N$ and $k=1, 2, \cdots, K$. 

\item The LHS of size N from independent $X_k$'s are given by
$x_{jk}=F_k^{-1}((p_{jk}-u_{jk})/N),\ j=1,2,\cdots N,\ k=1,2,\cdots, K$, where the inverse function $F_k^{-1}(\cdot)$ is the quantile defined by $F_k^{-1}(y)=\inf\{x:F_k(x)\geq y\}$.

\end{enumerate}
\end{algorithm}

For the above  algorithm, all the marginal distributions of $\mathbf{X}$ are stratified. Such marginal stratification leads to the stratification of the joint distribution and hence there is a reduction in variance of the estimator compared to the simple random sampling.
Note that, for any fixed $j$, $(x_{j1}, x_{j2}, \cdots, x_{jK})$ form a random sample from the   marginal distribution of $X_j$, where $j=1,2, \cdots,K$. But, as a whole, they are not a set of random samples from the joint distribution of $\mathbf{X}$  unless the component variables $X_1, X_2, \cdots , X_K$ are mutually independent. Hence the estimator $\hat{\tau}$, based on samples from LHS, becomes a biased estimator for $\tau$, when  $X_1, X_2, \cdots,  X_K$ are not mutually independent. This bias could lead to a large MSE of the estimator even if there is a significant reduction of variance due to stratification.

\subsection{Latin Hypercube Sampling from Dependent Random Variables} \label{algo:LHSD}
In this section, we propose a general method for the Latin Hypercube sampling, where the components of $\mathbf{X}$ are not necessarily independent. Let us define $F_k(x_k|X_1=x_1, X_2=x_2, \cdots, X_{k-1}=x_{k-1})$ as the conditional distribution function of $X_k$ given $X_1=x_1, X_2=x_2, \cdots, X_{k-1}=x_{k-1}$ for $k= 2, 3, \cdots, K$. First, we transform $\mathbf{X}$ to ${\mathbf{Z}} = (Z_1, Z_2, \cdots, Z_K)^T$, such that $Z_1=F_1(x_1)$ and $Z_k=F_k(x_k|X_1=x_1, X_2=x_2, \cdots, X_{k-1}=x_{k-1}), \ k=2, 3, \cdots, K$. Note that the components of ${\mathbf{Z}}$ are i.i.d. $U(0,1)$. Then, we generate a LHS of size $N$ from $\mathbf{Z}$ using the LHS algorithm, and finally, we convert them back to the random sample of $\mathbf{X}$ using the inverse distribution function of $\mathbf{Z}$. The different steps for the proposed LHSD algorithm are given in Algorithm \ref{lhsd}.

\begin{algorithm}
\caption{LHSD}
\label{lhsd}
\begin{enumerate}
\item Suppose the components of ${\mathbf{Z}}$ are i.i.d. $U(0,1)$. Using  Algorithm \ref{lhs} generate $z_{jk},\ j=1,2,\cdots N, \ k=1,2.\cdots, K$, a LHS of size $N$ from ${\mathbf{Z}}$. 


\item Sequentially get ${\mathbf{x}}_j =(x_{j1}, x_{j2}, \cdots, x_{jK})^T$, the $j$-th random sample from $\mathbf{X}$ for $j=1,2, \cdots,N$,  as given below:
\begin{itemize}
\item $x_{j1} = F^{-1}_1(z_{j1})$,
\item $x_{j2} = F^{-1}_2(z_{j2}|X_1=x_1)$,
\item $x_{j3} = F^{-1}_3(z_{j3}|X_1=x_1, X_2=x_2)$ \\
 \vdots
\item $x_{jK} = F^{-1}_K(z_{jK}|X_1=x_1, X_2=x_2, \cdots, X_{k-1}=x_{k-1})$,
\end{itemize}
where  $F_k^{-1}(\cdot|X_1=x_1, X_2=x_2, \cdots, X_{k-1}=x_{k-1})$ is the inverse distribution function of $X_k$ given $X_1=x_1, X_2=x_2, \cdots, X_{k-1}=x_{k-1}$ for $k=1, 2, \cdots, K$.
\end{enumerate}
\end{algorithm}

In this sampling scheme, the conditional distributions  of $\mathbf{X}$ are stratified, but not all the marginal distributions except for the first one. In fact, it can be shown that, in case of dependent random variables, all marginal distributions cannot be stratified simultaneously while preserving the complete joint distribution. However, if the components of $\mathbf{X}$ are independently distributed, then both the marginal and conditional distributions are stratified by this method. In this case, the proposed algorithm turns out to be the traditional LHS sampling scheme proposed by \cite{mckay1}. The advantage of the proposed method over the traditional LHS is that, the joint distribution of $\mathbf{X}$ is fully preserved by this sampling scheme, even when its component are dependent, because the product of the conditional distributions is precisely the corresponding joint distribution. As a result $\hat{\tau}$ becomes an unbiased estimator of $\tau$ when the samples from LHSD are used. The stratification of the conditional distributions also guarantees reduction in variance of the estimator. For these two reasons the LHSD sampling method is more efficient in terms of MSE of the estimator, when the components of $\mathbf{X}$ are dependent.

Note that, the definition of $\mathbf{Z}$ depends on the order of the components of $\mathbf{X}$. Moreover, after generating a random sample from $\mathbf{Z}$, we sequentially recover the components of the random sample for $\mathbf{X}$. So, our method depends on the order of the components of $\mathbf{X}$ in the definition of $\mathbf{Z}$. Theoretically, we can take any order to get a random sample from $\mathbf{X}$. However, the efficiency of the sampling scheme is maximized if we arrange the components of $\mathbf{X}$ in the descending order of conditional sensitivity. It will be further discussed in the next sections.   

\subsubsection{Centered Latin Hypercube Sampling } \label{sec:lhsd_cen}
Algorithm \ref{lhs} shows that the Latin hypercube sampling has two stages of  randomization process. In the first stage, $p_{jk}$ decides in which stratum the $j$-observation of $X_k$ will belong, where $j =1, 2, \cdots, N$ and $k=1, 2, \cdots, K$. Then, in the second stage, $u_{jk}$ fixes the position of the observation in that stratum.  The variability in the estimation of $\tau$ comes from these two sources of randomization. So, it is obvious that if we fix $u_{jk}=1/2$, the small sample variance of $\hat{\tau}$ is expected to reduce  further. It  introduces a small bias as it can be easily shown that $E(\hat{\tau}) \neq \tau$. However, the simulation studies show that the reduction in the variance is  larger than the bias term, and as a result, the MSE of $\hat\tau$ is reduced in small samples.  \cite{stein1} has proved that as the sample size $N \rightarrow \infty$, there is no role of $u_{jk}$ in the asymptotic distribution of $\hat{\tau}$ of the traditional LHS given in Algorithm \ref{lhs}. Therefore, the large sample properties of the modified LHS remain unchanged. Now, fixing $u_{jk}=1/2$ implies that we are choosing the sample from the middle of the stratum, so  the modified LHS can be called as the centered LHS. Note that the strata are randomized by $p_{jk}$ in the first stage of the algorithm, so the centered LHS is also a random sampling scheme where all observations are stratified. As our primary goal of this paper is to draw a LHS from a set of dependent random variables, we will apply this method in Algorithm \ref{lhsd}, where at the first stage, we generate $z_{jk}$ using the centered LHS instead of the traditional LHS; then rest of the method will remain unchanged. We denote the modified sampling scheme as the centered LHSD or LHSD$_c$. 

\subsubsection{LHSD using Copula} \label{sec:copula}
It should be noted that in the proposed LHSD method, the joint distribution of the inputs, $F$, 
and the corresponding marginal distributions $F_k(x_k|x_1,x_2, \cdots ,x_{k-1})$,  $k=2,3, \cdots ,K$ are assumed to be known. However, in some applications the joint distributions and the corresponding conditional distributions may be unknown. In those situations, we propose to use a copula-based method to construct the joint distribution from the known marginal distributions. 
Let the marginal cumulative distribution function of $X_k$ be $F_k(x_k), \ k=1,2, \cdots ,K$. Then according to Sklar's theorem \citep{sklar}, there exists a copula $C : [0,1]^K \rightarrow [0,1]$, such that the joint  density function  of $X_1, X_2, \cdots ,X_K$ can be written as
\beq
F(x_1,x_2, \cdots ,x_k)=C(F_1(x_1), F_2(x_2), \cdots ,F_K(x_K)).
\eeq
Let us define $U_k=F_k(X_k), \ k=1,2, \cdots ,K$. 
If $F$ is continuous then  $C$ is uniquely written as
\beq
C(u_1,u_2, \cdots ,u_K)=F(F_1^{-1}(u_1), F_2^{-1}(u_2), \cdots ,F_K^{-1}(u_K)).
\eeq
For $k=2,3, \cdots ,K$, the $k$-th conditional copula function is defined as 
\begin{align}
\label{condc}
C_k(u_k|u_1, \cdots ,u_{k-1}) 
& = P(U_k\leq u_k|U_1=u_1, \cdots ,U_{k-1}=u_{k-1}) \nonumber\\ & =\frac{\partial^{k-1}C_k(u_1, u_2, \cdots ,u_k)}{\partial u_1, \cdots ,\partial u_{k-1}}, 
\end{align}
where $C_k(u_1,u_2, \cdots ,u_k)=C(u_1, u_2, \cdots ,u_k,1,1, \cdots ,1)$, is the $k$-dimensional marginal copula function. 
 The conditional distribution of  $X_k$ given $X_1=x_1, X_2=x_2, \cdots, X_{k-1}=x_{k-1}$ is defined as
\begin{align}
    \label{condif}
& F(x_k|X_1=x_1, X_2=x_2, \cdots X_{k-1}=x_{k-1}) \nonumber\\
& \ \ \ =\frac{\partial^{k-1} }{\partial u_1, \cdots ,\partial u_{k-1}}C_k(u_1, \cdots ,u_k) \Bigl|_{u_1=F_1(x_1), \cdots ,u_k=F_k(x_k)}.
\end{align}
By choosing a suitable copula and then finding the conditional distributions from equation \eqref{condif}, one can use the same LHSD algorithm (Algorithm \ref{lhsd}) to sample from the joint distribution.
Alternatively, one can also use the conditional copulas in equation \eqref{condc} to sample from the joint distribution using  Algorithm \ref{lhsdc}.

\begin{algorithm}
\caption{LHSD using Copula}
\label{lhsdc}
\begin{enumerate}
\item Suppose the components of ${\mathbf{Z}}$ are i.i.d. $U(0,1)$. Using  Algorithm \ref{lhs} generate $z_{jk},\ j=1,2,\cdots N, \ k=1,2.\cdots, K$, a LHS of size $N$ from ${\mathbf{Z}}$.
\item Sequentially get ${\mathbf{u}}_j =(u_{j1}, u_{j2}, \cdots, u_{jK})^T$, the $j$-th random sample from the joint copula for $j=1,2, \cdots,N$, using the inverse conditional copula functions,
\begin{itemize}
\item $u_{j1} = z_{j1}$,
\item $u_{j2} = C_2^{-1}(z_{j2}|u_{j1})$,
\item $u_{j3} = C^{-1}_3(z_{j3}|u_{j1}, u_{j2}),$ \\
 \vdots
\item $u_{jK} = C^{-1}_K(z_{jK}|u_{j1}, \cdots ,u_{jK-1})$,
\end{itemize}

\item Obtain ${\mathbf{x}}_j =(x_{j1}, x_{j2}, \cdots, x_{jK})^T$, the $j$-th random sample from $\mathbf{X}$ for $j=1,2, \cdots,N$, using the inverse distribution function,
$x_{jk} = F^{-1}_K(u_{jk})$, for $k=1, 2, \cdots, K$.
\end{enumerate}
\end{algorithm}

There are several families of copula functions; and the most
commonly used families are elliptical copula family -- such as
Gaussian copula and t-copula, and Archimedean copula
family -- such as Gumbel copula, Clayton copula, Frank
copula etc. Different copula functions are suitable
to measure different dependency structures among the random
variables. For example, Gumbel copula is suitable to
demonstrate the dependency structure with upper tail
dependence, and Frank copula is adopted to measure the
symmetric dependency structure.

Given a data set, choosing a suitable copula function that fits
the data is an important but difficult problem. Since the
real data generation mechanism is unknown, it is possible that
several candidate copula functions may fit the data reasonably
well; on the other hand,  none of the candidate may give a reasonable fit. 
However, the common procedure to select a suitable copula function
contains the following three steps: (i) choose several commonly used
copula functions; (ii) for a given data set,
estimate  parameters of the selected copula functions by
some methods, such as the maximum likelihood estimation
method; and finally, (iii) select the optimal copula function. A divergence based shortest distance measure between the empirical distribution function and the estimated copula function is used to select the optimal copula function. 

It is to be noted that Algorithm \ref{lhsdc} can also be used to sample from any joint distribution when the conditional distributions are not known or difficult to calculate in the closed form, but the conditional copulas are easier to find.

\section{Large Sample Properties of the LHSD} \label{sec:large_sample}
Suppose we are interested in a measurable function $h: \mathbb{R}^K \rightarrow \mathbb{R}$. Our goal is to estimate $\tau = E(h(\mathbf{X}))$, where $\mathbf{X}$ is a random vector with distribution function $F(\cdot)$.  We assume that $h(\cdot)$ has a finite second moment. Based on $\mathbf{X}_1, \mathbf{X}_2, \cdots, \mathbf{X}_N$, a LHSD  of size $N$, the estimator of $\tau$ is given by $\hat{\tau}=N^{-1}\sum_{j=1}^N h(\mathbf{X}_j)$. In this section, we discuss about the asymptotic properties of $\hat{\tau}$. 

We define  $\mathbf{Z} = (Z_1, Z_2, \cdots, Z_K)^T$ by the following transformations:
\begin{itemize}
\item $Z_{j1} = F_1(X_{j1})$,
\item $Z_{j2} = F_2(X_{j2}|X_1=x_1)$,
\item $Z_{j3} = F_3(X_{j3}|X_1=x_1, X_2=x_2)$,\\
\vdots 
\item $Z_{jK} = F_K(X_{jK}|X_1=x_1, X_2=x_2, \cdots, X_{k-1}=x_{k-1})$,
\end{itemize}
where  $F_k(\cdot|X_1=x_1, X_2=x_2, \cdots, X_{k-1}=x_{k-1})$ is the conditional distribution function of $X_k$ given $X_1=x_1, X_2=x_2, \cdots, X_{k-1}=x_{k-1}$ for $k= 2, 3,  \cdots, K$. Suppose the transformation can be written as $\mathbf{Z} = \mathbf{m}(\mathbf{X})$, where  ${\mathbf{m}}: \mathbb{R}^K \rightarrow \mathbb{R}^K$. Then, we have
\begin{equation}
h({\mathbf{X}}) = h({\mathbf{m}}^{-1}({\mathbf{Z}})) \equiv h^*({\mathbf{Z}}),
\label{inv_transformation}
\end{equation}
where $h^*: \mathbb{R}^K \rightarrow \mathbb{R}$. Thus,  $\tau = E(h({\mathbf{X}})) = E(h^*(\mathbf{Z}))$.

Note that the components of $\mathbf{Z}$ are i.i.d. $U(0,1)$ variables. Once we transform $\mathbf{X}$ to $\mathbf{Z}$ and subsequently $h(\cdot)$ to $h^*(\cdot)$, our problem simplifies to the estimation of $h^*(\mathbf{Z})$, where the components of $\mathbf{Z}$ are i.i.d. $U(0,1)$ variables. Thus, the theoretical properties of $\hat{\tau}$ can be derived obtained from the work of \cite{stein1} and \cite{Owen}.

Let us assume that $\int_{A^K} h^{*2}({\mathbf{z}}) d{\mathbf{z}} < \infty$, where $A^K = \{{\mathbf{z}} \in \mathbb{R}^K : ||{\mathbf{z}}|| \leq 1 \}$. Following \cite{stein1} we  decompose $h^*$ as:
\begin{equation}
h^*({\mathbf{z}}) = \tau + \sum_{k=1}^K \alpha_k(z_k) + r({\mathbf{z}}),
\label{decomp}
\end{equation}
where 
\begin{equation}
\alpha_k(z_k) = \int_{A^{K-1}} (h^*({\mathbf{z}}) - \tau) d{\mathbf{z}}_{-k},
\label{def:alpha}
\end{equation}
${\mathbf{z}}_{-k}$ being the vector $\mathbf{z}$ without the $k$-th component. Here,  $\alpha_k(z_k)$ is called as the ``main effect'' function of the $k$-th component of $\mathbf{Z}$, and $r(\mathbf{z})$, defined by subtraction in equation (\ref{decomp}), is the ``residual from additivity'' of $h^*$. It is clear from the definition that
\begin{equation}
\int_0^1 \alpha_k(z_k) dz_k = 0 \mbox{ and } \ \
\int_{A^{K-1}} r({\mathbf{z}}) d{\mathbf{z}}_{-k} = 0,
\end{equation}
for all $k=1,2, \cdots, K$.
The variance of $\hat{\tau}$ under the LHSD is derived from the following theorem.
\begin{theorem}
As $N \rightarrow \infty$
\begin{equation}
{\rm{Var_{LHSD}}}(\hat{\tau}) = \frac{1}{N} \int_{A^K} r^2({\mathbf{z}}) d{\mathbf{z}} + o\left(\frac{1}{N}\right).
\label{var_LHS}
\end{equation}
\label{theorem:var_lhsd}
\end{theorem}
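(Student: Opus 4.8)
The plan is to exploit the reduction already built into the preceding discussion: the LHSD draw on $\mathbf{X}$ is, by construction, an ordinary LHS draw on the transformed vector $\mathbf{Z} = \mathbf{m}(\mathbf{X})$ whose components are i.i.d. $U(0,1)$. First I would check that the Rosenblatt-type transformation $\mathbf{m}$ defined through the successive conditional c.d.f.'s is a measurable bijection on the support, so that $h^*(\mathbf{z}) = h(\mathbf{m}^{-1}(\mathbf{z}))$ is well defined, and that the assumed finite second moment of $h$ transfers, via the change-of-variables formula, to the square integrability $\int_{A^K} h^{*2}(\mathbf{z})\, d\mathbf{z} < \infty$. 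At that point $\hat\tau = N^{-1}\sum_{j=1}^N h^*(\mathbf{Z}_j)$ is exactly a Latin hypercube estimator of $\tau = E(h^*(\mathbf{Z}))$ on the unit cube, and the problem is precisely the one solved by \cite{stein1} and \cite{Owen}.

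Next I would insert the functional ANOVA decomposition \eqref{decomp} and write
\[
\hat\tau - \tau = \sum_{k=1}^K M_k + R, \qquad M_k = \frac{1}{N}\sum_{j=1}^N \alpha_k(Z_{jk}), \qquad R = \frac{1}{N}\sum_{j=1}^N r(\mathbf{Z}_j),
\]
so that $\mathrm{Var}(\hat\tau) = \sum_{k,k'}\mathrm{Cov}(M_k, M_{k'}) + 2\sum_k \mathrm{Cov}(M_k, R) + \mathrm{Var}(R)$. The strategy is then to show that the residual term alone produces the stated leading order while every term involving a main effect is $o(1/N)$.

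For the main-effect terms I would use the defining one-dimensional stratification of LHS: for each fixed coordinate $k$, the values $Z_{1k},\dots,Z_{Nk}$ place exactly one point in each of the $N$ equal strata of $[0,1]$, so $M_k$ is a proportionally stratified Monte Carlo estimate of $\int_0^1 \alpha_k = 0$. Because the strata shrink at rate $1/N$, the within-stratum variation forces $\mathrm{Var}(M_k) = o(1/N)$ (indeed $O(N^{-3})$ when $\alpha_k$ is smooth), and the Cauchy--Schwarz bound then gives $\mathrm{Cov}(M_k, M_{k'}) = o(1/N)$ and $\mathrm{Cov}(M_k, R) = o(1/N)$; here the orthogonality relations $\int_0^1 \alpha_k = 0$ and $\int_{A^{K-1}} r\, d\mathbf{z}_{-k} = 0$ are what make the cross terms vanish to higher order. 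For the residual term, the same axis-wise zero-integral property of $r$ makes the permutation-induced dependence between distinct sample points negligible, so that $\mathrm{Cov}(r(\mathbf{Z}_i), r(\mathbf{Z}_j))$ for $i \neq j$ sums to $o(1/N)$ and $R$ behaves to leading order like a simple-random-sampling average, giving $\mathrm{Var}(R) = N^{-1}\mathrm{Var}(r(\mathbf{Z})) + o(1/N) = N^{-1}\int_{A^K} r^2 + o(1/N)$ since $E(r(\mathbf{Z})) = 0$. Collecting the pieces yields \eqref{var_LHS}.

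The hard part will be the careful accounting of the covariance structure induced by the random permutations --- precisely the claim that the diagonal residual contribution survives at order $1/N$ while all main-effect and cross contributions are negligible. This is the technical core of Stein's asymptotic theorem, so rather than reprove it from scratch I would verify that $h^*$ and its main effects $\alpha_k$ satisfy his square-integrability hypotheses (established by the change-of-variables step above) and then invoke \cite{stein1} directly; a secondary point is simply that $K$ is fixed, so the finite sum $\sum_k \mathrm{Cov}(M_k,\cdot)$ of $o(1/N)$ terms remains $o(1/N)$.
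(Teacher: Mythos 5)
Your proposal is correct, and it shares the paper's essential first move: observing that an LHSD draw on $\mathbf{X}$ is exactly an ordinary LHS draw on the i.i.d.-uniform vector $\mathbf{Z}=\mathbf{m}(\mathbf{X})$, so that $\hat\tau$ is a Latin hypercube estimator of $E(h^*(\mathbf{Z}))$ on the unit cube and the claim reduces to Stein's variance formula. Where you diverge is in how that formula is then established. The paper's appendix works directly with the joint density of a pair of LHS points, $f(\mathbf{z}_1,\mathbf{z}_2)=\bigl(\tfrac{N}{N-1}\bigr)^K\prod_{k=1}^K(1-G_N(z_{1k},z_{2k}))$, expands the product to first order, and evaluates $\mathrm{Cov}(h^*(\mathbf{Z}_1),h^*(\mathbf{Z}_2))$ via a Riemann-sum limit over the strata $I_j$; the ANOVA decomposition \eqref{decomp} enters only at the end, to rewrite $\int h^{*2}-\tau^2-\sum_k\int\alpha_k^2$ as $\int_{A^K}r^2$. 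You instead decompose the estimator itself as $\sum_k M_k+R$ and control each piece separately: the one-point-per-stratum structure of each coordinate makes $M_k$ a proportionally stratified estimate of $\int_0^1\alpha_k=0$ with $\mathrm{Var}(M_k)=o(1/N)$, Cauchy--Schwarz disposes of all cross terms, and the residual average carries the leading $N^{-1}\int r^2$. This organization is valid and arguably more modular --- it mirrors exactly the $\bar\alpha_k+\bar R$ split the paper uses to prove Theorem \ref{theorem:asymp_dist} --- and it makes transparent \emph{why} only the residual survives. The trade-off is that your one genuinely technical claim, namely that $\mathrm{Cov}(r(\mathbf{Z}_1),r(\mathbf{Z}_2))=o(1/N)$ because the axis-wise integrals of $r$ vanish, is precisely what the paper's joint-density expansion proves explicitly; you defer it to \cite{stein1}, which is legitimate (the paper itself says the result ``directly follows from'' Stein) but means the two write-ups differ mainly in which half of Stein's argument is reproduced and which half is cited.
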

The proof of the theorem is given in Appendix, and it directly follows from \cite{stein1}. 
Note that the variance of  $\hat{\tau}$ under the SRS is given by
\begin{equation}
{\rm{Var_{SRS}}}(\hat{\tau}) = \frac{1}{N} \int_{A^K} r^2({\mathbf{z}}) d{\mathbf{z}} + \frac{1}{N} \sum_{k=1}^K \int_{A^1} \alpha_k^2(z_k) dz_k + o\left(\frac{1}{N}\right).
\label{var_SRS}
\end{equation}
It shows that the variance of $\hat{\tau}$ in the LHSD is smaller than that of the SRS unless all the main effects $\alpha_k$ are equal to zero. For a fixed $N$, the variance of $\hat{\tau}$ is reduced by $\frac{1}{N} \sum_{k=1}^K \int_{A^1} \alpha_k^2(z_k) dz_k $ if the LHSD is used instead of the SRS.  Therefore, if the main effects are large, the reduction in variance can be substantial. One may notice that the function $\alpha_k$ in equation (\ref{decomp}) depends on the order of $\mathbf{X}$ in the transformation of $\mathbf{Z}$; and hence it is possible to minimize the variance by taking an appropriate order of $\mathbf{X}$. Our conjecture is that if the components of $\mathbf{X}$ are arranged in the decreasing order of their sensitivity in estimating function $h$, then there will be a maximum reduction of variance. This will be addressed in more detail in our future research. 
Now, the asymptotic distribution of $\hat{\tau}$ under the LHSD is obtained from the following theorem.
\begin{theorem}
The asymptotic distribution of $\hat{\tau}$ under the LHSD is given by
\begin{equation}
N^{1/2} (\hat{\tau} - \tau ) \ {\overset{a}{\sim}} \ N\left(0, \int_{A^K} r^2({\mathbf{z}}) d{\mathbf{z}} \right).
\end{equation} \label{theorem:asymp_dist}
\end{theorem}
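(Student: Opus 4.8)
The plan is to reduce the statement to the classical central limit theorem for Latin hypercube sampling and then invoke it. By construction (Algorithm \ref{lhsd} together with the transformation $\mathbf{Z}=\mathbf{m}(\mathbf{X})$ defined above), the LHSD sample $\mathbf{X}_1,\dots,\mathbf{X}_N$ is in one-to-one correspondence with an ordinary Latin hypercube sample $\mathbf{Z}_1,\dots,\mathbf{Z}_N$ on the unit cube, whose coordinates are i.i.d.\ $U(0,1)$. Since $h(\mathbf{X}_j)=h^*(\mathbf{Z}_j)$ and $\tau=E(h^*(\mathbf{Z}))$, the estimator $\hat{\tau}=N^{-1}\sum_j h^*(\mathbf{Z}_j)$ is exactly the LHS estimator of $\int h^*$. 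Hence the asymptotic law of $N^{1/2}(\hat{\tau}-\tau)$ under LHSD coincides with that of the LHS estimator of $h^*$, and it suffices to establish a CLT for the latter.

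First I would record the decomposition $h^*(\mathbf{z})=\tau+\sum_k \alpha_k(z_k)+r(\mathbf{z})$ from \eqref{decomp}, which gives $\hat{\tau}-\tau=N^{-1}\sum_{k=1}^K\sum_{j=1}^N \alpha_k(Z_{jk})+N^{-1}\sum_{j=1}^N r(\mathbf{Z}_j)$. Next I would show that the additive part is asymptotically negligible after scaling: because each coordinate of an LHS is perfectly stratified into $N$ equal-probability cells and $\int_0^1\alpha_k(z_k)\,dz_k=0$, the term $N^{-1}\sum_j\alpha_k(Z_{jk})$ is a one-dimensional stratified-sampling average of a mean-zero integrand, so its variance is $o(N^{-1})$ and therefore $N^{1/2}$ times it is $o_p(1)$. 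This is the same computation that produces the variance formula in Theorem \ref{theorem:var_lhsd}.

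The core of the argument is the CLT for the residual term $N^{-1}\sum_j r(\mathbf{Z}_j)$. Here the sample points are dependent through the $K$ random permutations, so a standard i.i.d.\ CLT does not apply directly. The key structural fact is that $r$ integrates to zero over every single coordinate; consequently the pairwise covariances $\mathrm{Cov}(r(\mathbf{Z}_i),r(\mathbf{Z}_j))$ induced by the permutation constraints are of smaller order, and the leading-order variance of $N^{-1/2}\sum_j r(\mathbf{Z}_j)$ equals $\int_{A^K} r^2(\mathbf{z})\,d\mathbf{z}$, matching Theorem \ref{theorem:var_lhsd}. Asymptotic normality of this dependent sum is precisely the content of Owen's central limit theorem for LHS, so I would verify its hypotheses --- finiteness of the second moment of $h^*$, which we have assumed --- and cite \cite{Owen} (with \cite{stein1} for the variance) to conclude $N^{1/2}(\hat{\tau}-\tau)\overset{a}{\sim}N\left(0,\int_{A^K} r^2(\mathbf{z})\,d\mathbf{z}\right)$.

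I expect the main obstacle to be exactly this dependent CLT for the residual: controlling the permutation-induced dependence and showing the higher-order covariances vanish fast enough to retain both the $\int r^2$ variance and Gaussianity. Since this is established in \cite{Owen} (and can alternatively be obtained through a Stein's-method argument for LHS), the remaining effort is bookkeeping --- confirming that the map $\mathbf{m}$ carries the LHSD construction onto a genuine LHS on the cube so that those results transfer verbatim.
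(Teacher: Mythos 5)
Your proposal is correct and follows essentially the same route as the paper: transform to the unit cube so the LHSD estimator becomes an ordinary LHS estimator of $h^*$, apply the decomposition \eqref{decomp}, kill the additive terms $N^{-1}\sum_j\alpha_k(Z_{jk})$ using the variance computation behind Theorem \ref{theorem:var_lhsd}, and invoke Owen's CLT (Lemma 2 of \cite{Owen}) for the residual sum. (Your statement that $N^{1/2}\bar{\alpha}_k=o_p(1)$ is in fact the correct rate; the paper's ``$o_p(1/N)$'' appears to be a typo.)
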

After taking transformation from $\mathbf{X}$ to $\mathbf{Z}$, the proof of the theorem directly follows from \cite{Owen}. An outline of the proof is given in Appendix. Using the above theorem, we can construct a confidence interval for $\tau$ provided we have an estimate of $\int_{A^K} r^2({\mathbf{z}}) d{\mathbf{z}} $. For this reason, we must estimate the additive main effects $\alpha_k$ from the data. One may use a non-parametric regression for this purpose. It assumes some degree of smoothness for $\alpha_k$ in case of application. Equation (\ref{decomp}) shows that a non-parametric regression of  $\mathbf{Y} = h^*(\mathbf{Z})$ on $Z_k$ provides an estimate $\hat{g}_k$ of the function $g_k = \tau + \alpha_k$. So, $\alpha_k$ is estimated by $\hat{\alpha}_k = \hat{g}_k - \bar{Y}$, where
$$ \bar{Y} = \frac{1}{N} \sum_{j=1}^N \mathbf{Y}_j =\frac{1}{N} \sum_{j=1}^N h^*(\mathbf{Z}_j)   = \frac{1}{N} \sum_{j=1}^N h(\mathbf{X}_j) = \hat{\tau}. $$
Then, the residual for the $i$-th observation is estimated as
$$ \hat{r}_i = Y_i - \bar{Y} - \sum_{k=1}^K \hat{\alpha}_k(Z_{ik}) .$$
Finally, the variance $\int_{A^K} r^2({\mathbf{z}}) d{\mathbf{z}}$ is estimated by $\hat{r}_i^2/N$. The theoretical properties of the variance estimate are discussed in \cite{Owen}.

\section{Simulation Studies} \label{sec:simulation}
To compare the performance of our proposed LHSD and LHSD$_c$ methods with the other existing methods, we perform two simulation studies. In the first example, the set of input variables is assumed to follow a multivariate normal distribution; the second example takes  Gumbel's bivariate logistic distribution. The estimated quantity is the mean of a non-linear function of the output variables. 

\subsection{LHS from a Multivariate Normal Distribution}
 We assume that input variable $\mathbf{X} = (X_1, X_2, \cdots, X_K)^T$ follows a multivariate normal distribution, $N_K({\mathbf{\mu, \Sigma}})$, where ${\mathbf{\mu}}=(\mu_1, \mu_2, \cdots, \mu_K)^T$ is the mean vector, and ${\mathbf{\Sigma}} = ((\sigma_{ij}))_{K\times K}$ is the covariance matrix. Now, $X_1 \sim N(\mu_1, \sigma_{11})$, and for $k=2,3, \cdots, K$ the conditional distribution of the $k$-th component of $\mathbf{X}$ is given by $X_k|x_1, \cdots ,x_{k-1} \sim N(\mu_k^*, \sigma_k^{*2})$, where %
\begin{eqnarray}
    \mu_k^* &=& \mu_k + \mathbf{\sigma}_k^T \mathbf{\Sigma}_{kk}^{-1}  \left( \mathbf{x}^{(k)} - \mathbf{\mu}^{(k)} \right),  \\
    \sigma_k^{*2} &=& \sigma_{kk} - \mathbf{\sigma}_k^T \mathbf{\Sigma}_{kk}^{-1} \mathbf{\sigma}_k     .
\end{eqnarray}
Here, $\mathbf{x}^{(k)}=(x_1, x_2, \cdots, x_{k-1})^T$, $\mathbf{\mu}^{(k)}=(\mu_1, \mu_2, \cdots, \mu_{k-1})^T$,  $\mathbf{\sigma}_k = (\sigma_{k1}, \sigma_{k2}, \cdots, \sigma_{k(k-1)})^T$, and $\mathbf{\Sigma}_{kk}$ is the $(k-1)\times (k-1)$ matrix containing the first $(k-1)$ rows and $(k-1)$ columns of $\mathbf{\Sigma}$. 
Then, using the proposed LHSD algorithm (Algorithm \ref{lhsd}), we get the samples from the joint distribution using the following steps:

\begin{enumerate}

\item Using  Algorithm \ref{lhs}, we generate $z_{jk},\ j=1,2,\cdots N, \ k=1,2.\cdots, K$, a LHS of size $N$ from ${\mathbf{Z}}$ whose components  are from i.i.d. $U(0,1)$.

\item For $j=1,2, \cdots, N$ the LHSD from $\mathbf{X}$ is given by
\begin{itemize}
\item $x_{j1} = \Phi^{-1}(z_{j1}; \mu_1, \sigma_{11})$,
\item $x_{j2} = \Phi^{-1}(z_{j2}; \mu_2^*, \sigma_2^{*2})$,
\item $x_{j3} = \Phi^{-1}(z_{j3}; \mu_3^*, \sigma_3^{*2})$, \\
\vdots
\item $x_{jK} = \Phi^{-1}(z_{jK}; \mu_K^*, \sigma_K^{*2})$,
\end{itemize}
where  $\Phi^{-1}(\cdot; \mu^*, \sigma^{*2})$ is the inverse distribution function of the normal distribution with mean $\mu^*$, and standard deviation $\sigma^*$.
\end{enumerate}

\begin{figure}
\centering
\includegraphics[height=8cm, width=12cm]{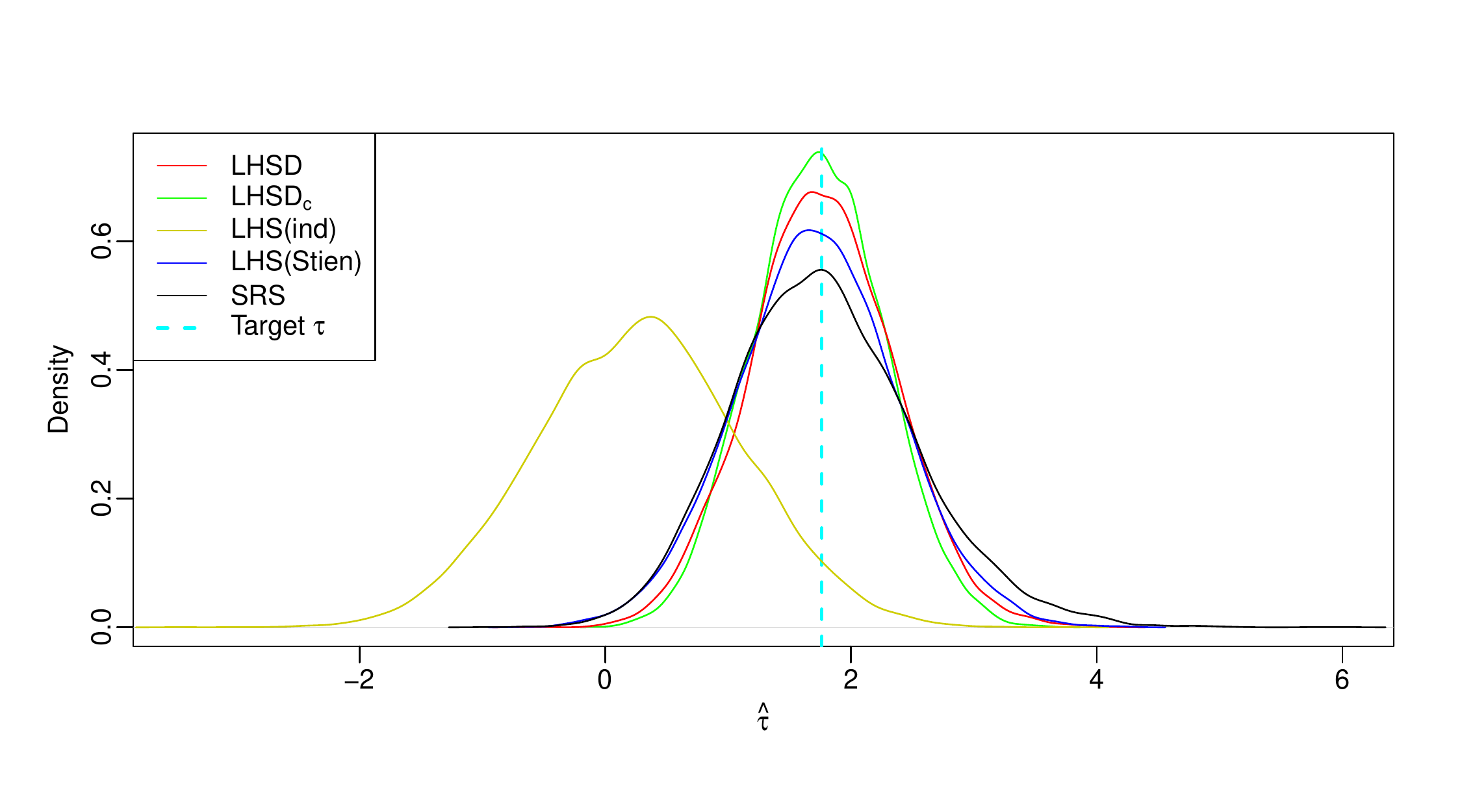}
\caption{The empirical density plot of $\hat{\tau}$ using five different sampling methods, where samples are drawn from a multivariate normal distribution.}
\label{fig:LHS_Normal}
\end{figure}


In this simulation study, the input variable $\mathbf{X}$ is generated from a four-dimensional multivariate normal distribution $N_4(\mu, \Sigma)$. The parameters of the distribution, i.e. $\mu$ and $\Sigma$, are randomly chosen, and then they are kept fixed for all replications. Here, $\mu$ is generated from the standard normal distribution. 
For the covariance matrix $\Sigma$,  first we generated a $4\times 4$ matrix $P$ from the standard normal distribution, then we took $\Sigma = P^TP$. We have generated random samples using five different methods -- (i) LHSD: LHS using our proposed method given in Algorithm \ref{lhsd}, (ii) $\rm{LHSD_c}$: centered LHSD as proposed in Section \ref{sec:lhsd_cen}, (iii)  LHS(ind): LHS assuming that the components of $\mathbf{X}$ are independent as given in Section \ref{LHS_ind}, (iv) LHS(Stien): rank-based LHS  proposed by \cite{stein1} and (v) SRS: the simple random sample.  We used a fixed sample of size $N=30$ in all five cases and replicated the simulations $10,000$ times. We are interested in estimating $\tau = E(h(\mathbf{X}))$, where
\begin{equation}
h(\mathbf{X}) = x_1 + \frac{x_2 x_3}{2} - x_2 \log(|x_1|) + \exp\left(\frac{x_4}{4}\right),
\end{equation}
and $\mathbf{X} = (x_1, x_2, x_3, x_4)^T$. We estimate $\tau$ by $\hat{\tau}=N^{-1}\sum_{j=1}^N h(\mathbf{X}_j)$ using these five sampling procedures. The empirical densities of $\hat{\tau}$ are plotted in Figure \ref{fig:LHS_Normal}. 
We also repeated the experiment by taking sample sizes as $20$, $75$, and $100$.
The bias, variance and MSE of $\sqrt{N}\hat{\tau}$ for all the five sampling methods are presented in Table \ref{tab:LHS_Normal}. The true value of $\tau = E(h(\mathbf{X}))$ is calculated based on a SRS of size $N=10,000$. The results show that LHS(ind) has a large bias, as it ignores the dependency structure among the components of $\mathbf{X}$. All other methods give very small bias. The variance of the SRS is very large, and for this reason, the MSE is also very large. On the other hand, our LHSD-based methods give the lowest MSE among all the methods. As mentioned in the previous section, the centered LHSD introduces a small bias, but it substantially reduces the variance of the estimator. As a result, it outperforms the other sampling schemes.

\begin{table}[]
    \centering
    \begin{tabular}{l|rrr|rrr}
\multirow{2}{*}{Methods}   &  Bias   &   Variance     &      MSE &  Bias   &   Variance     &      MSE \\
\cline{2-7}  &     \multicolumn{3}{c|}{$N=20$}   &   \multicolumn{3}{c}{$N=30$}  \\
         \hline
       LHSD &   0.08 & 10.55 & 10.56 & 0.03 & 10.22 & 10.22\\
       $\rm{LHSD_c}$ & -0.20 & 7.19 & 7.23 & -0.15 & 8.24 & 8.26\\
   LHS(ind) &  -6.59 & 21.24 & 64.61 & -8.15 & 21.03 & 87.45\\
 LHS(Stien) & -0.39 & 13.06 & 13.21 & -0.22 & 12.81 & 12.86\\
        SRS & -0.03 & 16.50 & 16.50 & 0.12 & 16.75 & 16.77\\
\hline
Methods &     \multicolumn{3}{c|}{$N=75$}   &   \multicolumn{3}{c}{$N=100$}  \\
\hline
LHSD & 0.04 & 9.18 & 9.18 & -0.42 & 9.09 & 9.27\\
 $\rm{LHSD_c}$ &  -0.09 & 8.25 & 8.26 & -0.56 & 8.10 & 8.42\\ 
   LHS(ind) & -12.77 & 20.03 & 183.16 & -15.29 & 19.40 & 253.24\\ 
   LHS(Stien) & -0.24 & 11.96 & 12.02 & -0.60 & 12.13 & 12.49\\ 
   SRS &-0.06 & 16.55 & 16.55 & -0.41 & 16.64 & 16.81\\ 
   \hline
     \end{tabular}
    \caption{The bias, variance and MSE of $\sqrt{N}\hat{\tau}$ using five different sampling methods, where samples are drawn from a multivariate normal distribution.}
    \label{tab:LHS_Normal}
\end{table}

Table \ref{tab:LHS_Normal} shows that the bias and variance of $\sqrt{N}\hat{\tau}$ for the LHSD  are stable as the sample size increases,  so it demonstrates that the estimator is $\sqrt{N}$-consistent as proved in Theorem \ref{theorem:asymp_dist}.
We also calculated the theoretical variances of the LHSD and SRS methods using equations (\ref{var_LHS}) and (\ref{var_SRS}), respectively, for sample size $30$. We generated a large SRS of size $N=10,000$  and estimated $\alpha_k(z_k)$ and $r(\mathbf{z})$ as discussed at the end of Section \ref{sec:large_sample}. The k-nearest neighbor algorithm is used in the non-parametric regression. The theoretical variances of $\sqrt{N}\hat{\tau}$ for the LHSD and SRS are obtained as $9.45$ and $17.60$, respectively. These values are also very close to the simulation results as given in Table \ref{tab:LHS_Normal}.

\begin{table}[]
    \centering
    \begin{tabular}{l|rr|rr}
\multirow{2}{*}{Methods}   &  Mean   &   Variance      &  Mean   &   Variance     \\
\cline{2-5}  &     \multicolumn{2}{c|}{$N=20$}   &   \multicolumn{2}{c}{$N=30$}  \\
         \hline
       LHSD & 13.372 & 0.067 & 12.962 & 0.034\\
       $\rm{LHSD_c}$ & 13.190 & 0.022 & 12.843 & 0.014\\
   LHS(ind) & 59.854 & 138.219 & 59.742 & 89.030\\
 LHS(Stien) & 17.115 & 3.746  & 15.592 & 1.546\\
        SRS &13.114 & 0.511 & 12.760 & 0.328\\
\hline
Methods &     \multicolumn{2}{c|}{$N=75$}   &   \multicolumn{2}{c}{$N=100$}  \\
\hline
LHSD & 12.287 & 0.009 &  12.123 & 0.006\\
 $\rm{LHSD_c}$ & 12.237 & 0.006 & 12.085 & 0.005\\ 
   LHS(ind) & 59.299 & 33.388 & 59.118 & 23.950 \\ 
   LHS(Stien) & 13.464 & 0.247 & 13.037 & 0.148 \\ 
   SRS & 12.156 & 0.126 & 12.016 & 0.093\\ 
   \hline
     \end{tabular}
    \caption{The mean and variance of the KL-divergence measure for five different sampling schemes, where samples are drawn from a multivariate normal distribution.}
    \label{tab:LHS_Normal_KL}
\end{table}

Now, we present a comparison study to see how well the different sampling schemes represent the target distribution. We need a goodness-of-fit measure to calculate the deviation of a random sample from the target distribution $N_4(\mu, \Sigma)$. The Kullback-Leibler (KL) divergence is used for this purpose (\citealp{kl1951}). Suppose $f(\cdot)$ is the probability density function (p.d.f.) of the target distribution, and $f_n(\cdot)$ is the empirical p.d.f. Then, the KL-divergence between two densities is defined as
\begin{equation}
    {\rm{KL}}(f_n, f) = \int_x f_n(x) \log \frac{f_n(x)}{f(x)} dx 
    = \int_x f_n(x) \log f_n(x) dx -  \int_x f_n(x) \log  f(x) dx.  
    \label{kl}
\end{equation}
The first term in equation (\ref{kl}) is called the Shannon entropy function, which is estimated using `entropy' function of `KNN' package in R. The second term in equation (\ref{kl}) is simply estimated by $\sum_{i=1}^N \log f(x_i)$. A small value of the KL-divergence indicates that the densities $f_n$ and $f$ are close to each other. If the two densities are identical, then the KL-divergence is zero, otherwise, it is always positive. 
The mean and variance of the  KL-divergence over 10,000 replications are presented in Table \ref{tab:LHS_Normal_KL} for different sample sizes. The SRS gives the minimum mean, as it  generates  unbiased random samples from the target distribution. Our LHSD-based sampling schemes are also giving the mean KL-divergence very close to the SRS, and their variances are  around ten times smaller than the SRS. The mean from the rank based LHS method is very high for the small sample sizes, which indicates that there is a large bias in sampling from the target distribution. We also noticed that the variance of the KL-divergence using rank based LHS is considerably high in this example.

\subsection{LHS from a Bivariate Logistic Distribution}
In this section, we assume that input variable $\mathbf{X}$ follows the Gumbel's bivariate logistic distribution, where the joint distribution function is given by
$$ F(x_1, x_2) = \frac{1}{1+ \exp(-x_1)+ \exp(- x_2)}, \ \ x_1, \ x_2 \geq 0.$$
The marginal distribution of $X_k$ is given by $F_k(x_k) =  \frac{1}{1+ \exp(- x_k)}, \ k=1,2$, The corresponding bivariate copula function is given by $C(u_1,u_2)=\frac{u_1u_2}{u_1+u_2-u_1u_2}$, $u_1,u_2 \in [0,1]$. Since the copula is well known, we use Algorithm \ref{lhsdc} to sample from the joint distribution using LHSD.
The conditional copula is given by $C_2(u_2|u_1)=\frac{u_2^2}{(u_1+u_2-u_1u_2)^2}$. The samples from the joint distribution are obtained using the following steps. 

\begin{enumerate}

\item Using  Algorithm \ref{lhs}, we generate $z_{jk},\ j=1,2,\cdots N, \ k=1,2.\cdots, K$, a LHS of size $N$ from ${\mathbf{Z}}$ whose components  are from i.i.d. $U(0,1)$.

\item For $j = 1,2, \cdots ,N$, the corresponding samples from the joint copula is given by
\begin{itemize}
    \item $u_{j1}=z_{i1}$
    \item $u_{j2} = C_2^{-1}(z_{j2}|u_{j1}) = \frac{u_{j1}\sqrt{z_{j2}}}{1-\sqrt{z_{j2}}+u_{j1}\sqrt{z_{j2}}}$.
\end{itemize}
 
\item For $j=1,2,\cdots N$ the LHSD from $\mathbf{X}$ is given by


$x_{jk} = F_k^{-1}(u_{jk}) = -\log\left(\frac{1-u_{jk}}{u_{jk}}\right),\  k=1, 2.$



\end{enumerate}

We are interested in estimating  $\tau = E(h(\mathbf{X}))$, where
$
h(\mathbf{X}) = x_1 - x_2 + x_2 \log(|x_1|).
$
We generate random samples using the same five methods discussed in the previous example. Here, we use  $N=30$ in all five cases and replicate the simulations $10,000$ times. The empirical density functions of $\hat{\tau}$ using these  methods are presented in Figure \ref{fig:LHS_Exponential_plot}. The biases and MSEs are given in Table \ref{fig:LHS_Exponential_plot} for four different  sample sizes $N=20$, 30, 75 and 100. It is seen that the centered LHSD performs best. The rank based LHS method and LHSD have similar performance, but LHSD is slightly better in terms of the MSE. The LHS assuming independence produced a very large bias, and the SRS produced a large variance as expected.

\begin{figure}
\centering
\includegraphics[height=8cm, width=12cm]{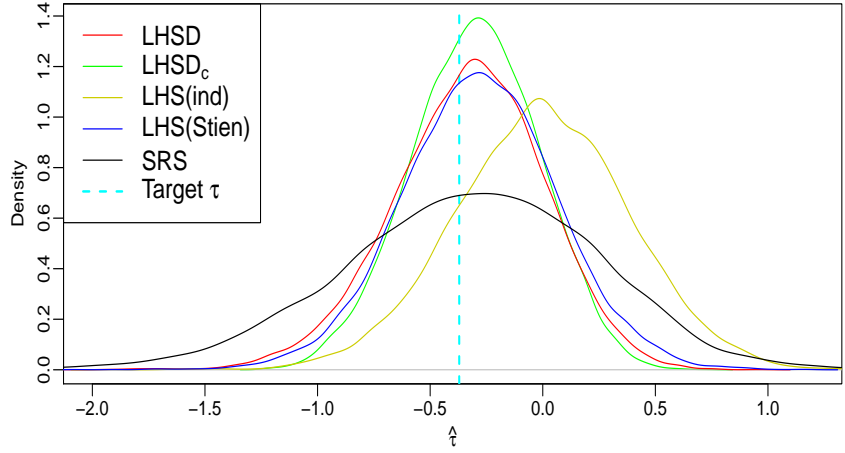}
\caption{The empirical density plot of $\hat{\tau}$ using five different sampling methods, where samples are drawn from the  bivariate logistic distribution.}
\label{fig:LHS_Exponential_plot}
\end{figure}


\begin{table}[]
    \centering
    \begin{tabular}{l|rrr|rrr}
\multirow{2}{*}{Methods}   &  Bias   &   Variance     &      MSE &  Bias   &   Variance     &      MSE \\
\cline{2-7}  &     \multicolumn{3}{c|}{$N=20$}   &   \multicolumn{3}{c}{$N=30$}  \\
         \hline
       LHSD  & 0.24 & 3.74 & 3.79 &  0.22 & 3.42 & 3.46\\
       $\rm{LHSD_c}$ & 0.43 & 2.29 & 2.47 & 0.39 & 2.38 & 2.53\\
   LHS(ind)  & 1.70 & 4.89 & 7.77 & 2.01 & 4.52 & 8.55\\
 LHS(Stien) & 0.50 & 3.70 & 3.95  & 0.48 & 3.61 & 3.84\\
        SRS & 0.26 & 9.74 & 9.81  & 0.26 & 9.67 & 9.74\\
\hline
Methods &     \multicolumn{3}{c|}{$N=75$}   &   \multicolumn{3}{c}{$N=100$}  \\
\hline
LHSD  &-0.23 & 3.19 & 3.24 & 0.06 & 3.14 & 3.15\\
 $\rm{LHSD_c}$ & -0.09 & 2.55 & 2.55 & 0.16 & 2.76 & 2.79\\ 
   LHS(ind) & 2.61 & 4.45 & 11.24 & 3.33 & 4.38 & 15.45\\ 
   LHS(Stien) &-0.11 & 3.26 & 3.27 & 0.20 & 3.21 & 3.25\\ 
   SRS & -0.23 & 9.92 & 9.97 & 0.06 & 9.90 & 9.91\\ 
   \hline
     \end{tabular}
    \caption{The bias, variance and MSE of $\sqrt{N}\hat{\tau}$ using five different sampling methods, where samples are drawn from the bivariate logistic distribution.}
    \label{tab:LHS_Exponential}
\end{table}

\begin{table}[]
    \centering
    \begin{tabular}{l|rr|rr}
\multirow{2}{*}{Methods}   &  Mean   &   Variance      &  Mean   &   Variance     \\
\cline{2-5}  &     \multicolumn{2}{c|}{$N=20$}   &   \multicolumn{2}{c}{$N=30$}  \\
         \hline
       LHSD &7.732 & 0.033 &  7.682 & 0.016 \\
       $\rm{LHSD_c}$ & 7.635 & 0.007 & 7.617 & 0.004 \\
   LHS(ind) &8.312 & 0.074 & 8.277 & 0.042 \\
 LHS(Stien) &7.828 & 0.093 & 7.757 & 0.058 \\
        SRS  &7.564 & 0.294  & 7.531 & 0.195 \\
\hline
Methods &     \multicolumn{2}{c|}{$N=75$}   &   \multicolumn{2}{c}{$N=100$}  \\
\hline
LHSD & 7.639 & 0.004  & 7.636 & 0.003\\
 $\rm{LHSD_c}$ & 7.615 & 0.002 & 7.617 & 0.002\\ 
   LHS(ind) &8.257 & 0.012 & 8.258 & 0.008 \\ 
   LHS(Stien) &7.681 & 0.022 & 7.673 & 0.017 \\ 
   SRS & 7.536 & 0.078  & 7.536 & 0.057 \\ 
   \hline
     \end{tabular}
    \caption{The mean and variance of the Kullback-Leibler divergence measure for five different sampling methods, where samples are drawn from the bivariate logistic distribution. }
    \label{tab:LHS_Exponential_KL}
\end{table}

Like the previous example, we also estimated the KL-divergence using equation (\ref{kl}). The result is shown in Table \ref{tab:LHS_Exponential_KL}. On an average, the SRS produces the minimum KL-divergence as the sampling scheme is unbiased. Both LHSD and LHSD$_c$ gives slightly higher mean values than SRS, however they are lower than the independent LHS and the ranked based LHS methods. The variance of the KL-divergence using LHSD$_c$ method is very low compared to the other methods.  

\section{Real Data Example: River Flood Inundation} \label{sec:river}
We  now illustrate our sampling methodology for a real application related to the river flood inundation model (\citealp{MR2807168}). The study case concerns an industrial site located near a river and protected from it by a dyke. When the river height exceeds  one of the dyke, flooding occurs. The goal is to study the water level with respect to the dyke height. 
The model considered here is based on a crude simplification of the 1D hydro-dynamical equations of Saint Venant under  assumptions of uniform and constant flow rate and large rectangular sections. It consists of an equation that involves the characteristics of the river stretch given by
\begin{eqnarray}
\label{rivereq}
S &=& Z_v + H - H_d - C_b, \\
H &=& \left( \frac{Q}{BK_s \sqrt{\frac{Z_m - Z_v}{L}}} \right)^{0.6}.
\end{eqnarray}
The model output $S$ is the maximal
annual overflow (in meters) and $H$ is the maximal annual height of the river (in meters). The inputs of the model (8 inputs) together  with their probability distributions are defined in Table \ref{tab:river}. Among the input variables of the model, dyke height $(H_d)$ is a design parameter; its variation range corresponds to a design
domain. The randomness of the other variables is due to their spatio-temporal variability, our ignorance of their true value or some inaccuracies of their estimation. The river flow model is illustrated in Figure \ref{fig:River}.

\begin{table}[]
    \centering
    \setlength\tabcolsep{1.5pt} 
    \begin{tabular}{lccc}
         Input & Description & Unit & Probability distribution\\
         \hline
$Q$ & Maximal annual flowrate &  $m^3/s$ &  Truncated Gumbel $G(1013, 558)$ on $[500, 3000]$\\
$K_s$ &  Strickler coefficient &  -- &  Truncated normal $N(30, 8)$ on $[15,\infty)$\\
$Z_v$ &  River downstream level &  m &  Triangular $T(49, 50, 51)$\\
$Z_m$ &  River upstream level &  m  & Triangular $T(54, 55, 56)$\\
$H_d$ &  Dyke height &  m &  Uniform $U(7, 9)$\\
$C_b$ &  Bank level &  m &  Triangular $T(55, 55.5, 56)$\\
L &  Length of the river stretch &  m &  Triangular $T(4990, 5000, 5010)$\\
B &  River width &  m &  Triangular $T(295, 300, 305)$\\
\hline
     \end{tabular}
    \caption{Input variables of the flood model and their probability distributions.}
    \label{tab:river}
\end{table}

\begin{figure}
\centering
\includegraphics[height=8cm, width=12cm]{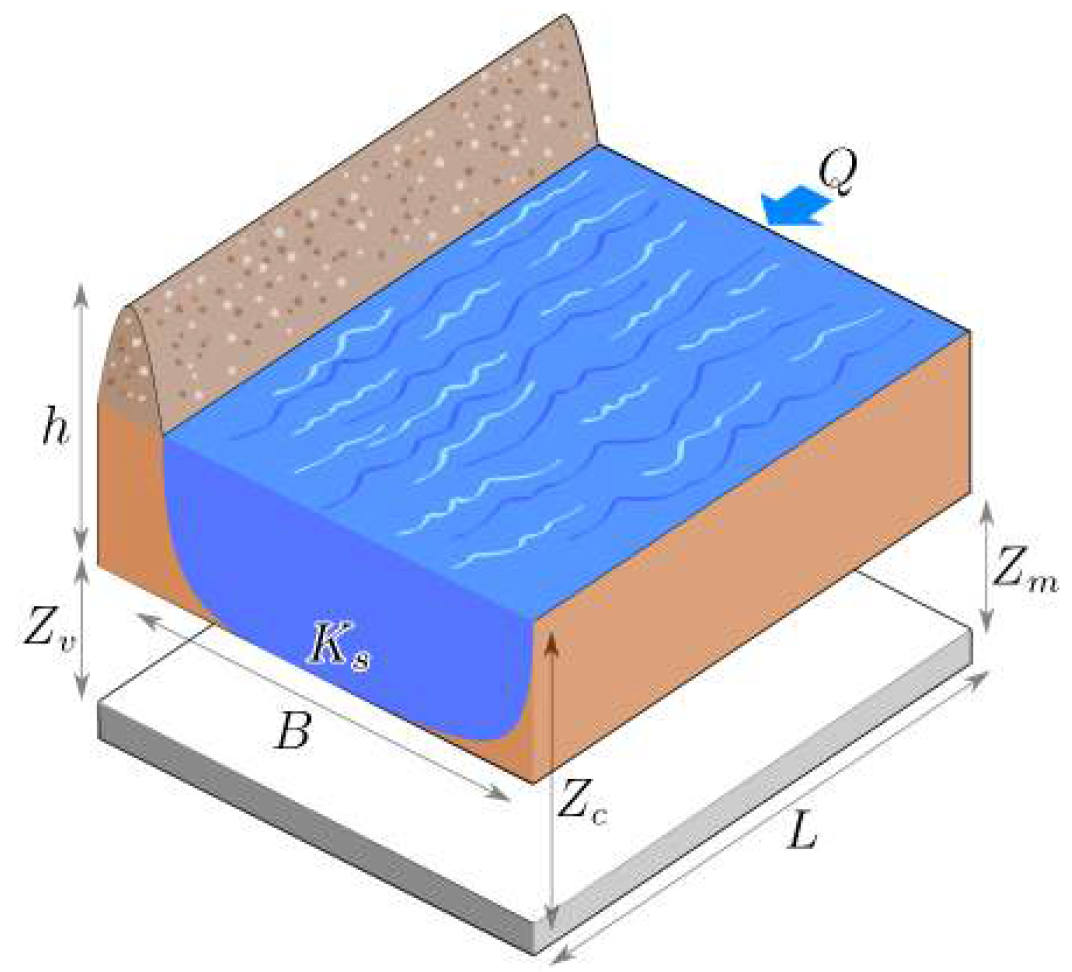}
\caption{Different inputs for the river flood inundation model.}
\label{fig:River}
\end{figure}

From the previous studies it is concluded that the inputs maximal annual flow-rate ($Q$) and Strickler coefficient ($K_s$) are correlated with correlation coefficient $\rho_{Q,K} = 0.5$ (see \citealp{MR3020270}). This correlation is admitted in real case as it is known that the friction coefficient increases with the flow rate. Similarly, the river downstream $(Z_v)$ and river upstream level $(Z_m)$ are known to be correlated with correlation coefficient $\rho_{Z_m,Z_v} = 0.3$. The correlation coefficient between the length of the river $(L)$ and breadth of the river $(B)$ are also known to be $\rho_{L,B} = 0.3$. The other input variables are assumed to be uncorrelated.

The quantity of interest is the mean maximal annual overflow $\tau=E(S)$. We compare our proposed LHSD-based methods with the SRS, LHS(ind), LHS(Stien) in estimating $\tau$. Although the marginal distribution of the inputs $X=(Q, K_s, Z_v, Z_m, H_d, C_b, L, B)^T$ are known, the joint distribution is not known. We construct a joint distribution using the multivariate normal copula with the known pairwise correlation coefficients as described in Section \ref{sec:copula}. The choice of such copula is appropriate here as it would retain the known correlation structure in the inputs. The multivariate normal copula function for $K$ variables is given by
\beq
C(u_1,u_2, \cdots ,u_K; \Sigma)=\Phi_K(\Phi^{-1}(u_1), \Phi^{-1}(u_2), \cdots, \Phi^{-1}(u_K);\Sigma),
\eeq
where  $\Phi_K(\cdot;\Sigma)$ is the joint c.d.f. of the multivariate normal distribution with covariance matrix $\Sigma$, and $\Phi(\cdot)$ is the c.d.f. of a standard normal distribution. In this example, $K=8$, and $\Sigma=((\sigma_{ij}))$, where $\sigma_{i,i}=1$ for $i=1,\cdots,K$,  $\sigma_{1,2}=\sigma_{2,1}=0.5,\ \sigma_{3,4}=\sigma_{4,3}=0.3,\ \sigma_{7,8}=\sigma_{8,7}=0.3$, and all other elements of $\Sigma$ are $0$.
As most of the components on the copula function are assumed to be independent and the dependency exists only pairwise, the conditional copulas is easily obtained using properties of the bivariate normal distribution. The inverse of such conditional distributions can also be obtained using the inverse of the univariate normal distribution. Hence, we  use Algorithm \ref{lhsdc} to sample from the inputs using LHSD. The same conditional copulas are  used to sample from the joint distribution using the SRS, those are also used in the first step of rank based LHS method.

We sample $30$ observations from the inputs using all these five methods and computed the corresponding value of the outputs using \eqref{rivereq}. The value of $\tau$ is estimated by the sample mean of the  outputs. We repeated this procedure $10,000$ times to estimate the bias and variance of the estimator using these five methods. We also repeated the experiment by taking sample sizes as $20$, $75$, and $100$. The results are shown in Table \ref{tab:river_output}. The empirical  probability distributions of the estimators are shown in Figure \ref{river_fig2}. Here, the target value of $\tau$ is computed by sampling $10,000$ inputs using the SRS method. Our LHSD-based methods give the best results  in terms of both bias and MSE of the estimator. It is interesting to observe that the variance of the estimator using the SRS is very high, and for that reason even the biased LHS(ind) gives a smaller MSE than the SRS.

\begin{table}[h]
    \centering
    \begin{tabular}{l|rrr|rrr}
\multirow{2}{*}{Methods}   &  Bias   &   Variance     &      MSE &  Bias   &   Variance     &      MSE \\
\cline{2-7}  &     \multicolumn{3}{c|}{$N=20$}   &   \multicolumn{3}{c}{$N=30$}  \\
         \hline
       LHSD  & 0.013 & 0.010 & 0.010 & 0.062 & 0.008 & 0.011\\
       $\rm{LHSD_c}$ & 0.010 & 0.005 & 0.005 & 0.061 & 0.005 & 0.009\\
   LHS(ind) &  0.240 & 0.018 & 0.076   & 0.339 & 0.014 & 0.129\\
 LHS(Stien) &  0.029 & 0.013 & 0.014  & 0.073 & 0.009 & 0.015\\
        SRS & 0.013 & 0.853 & 0.853  & 0.079 & 0.819 & 0.825\\
\hline
Methods &     \multicolumn{3}{c|}{$N=75$}   &   \multicolumn{3}{c}{$N=100$}  \\
\hline
LHSD  & 0.017 & 0.006 & 0.006  & 0.002 & 0.005 & 0.005\\
 $\rm{LHSD_c}$ & 0.017 & 0.005 & 0.005  & 0.002 & 0.005 & 0.005 \\ 
   LHS(ind) & 0.455 & 0.011 & 0.218 & 0.508 & 0.011 & 0.269\\ 
   LHS(Stien) & 0.028 & 0.006 & 0.007 & 0.010 & 0.006 & 0.006\\ 
   SRS & 0.030 & 0.860 & 0.861 & -0.006 & 0.822 & 0.822 \\ 
   \hline
     \end{tabular}
    \caption{The bias, variance, and MSE of $\sqrt{N}\hat{\tau}$ using five different sampling methods, where samples are drawn from the river flood inundation model.}
    \label{tab:river_output}
\end{table}

\begin{figure}
\centering
\includegraphics[height=8cm, width=12cm]{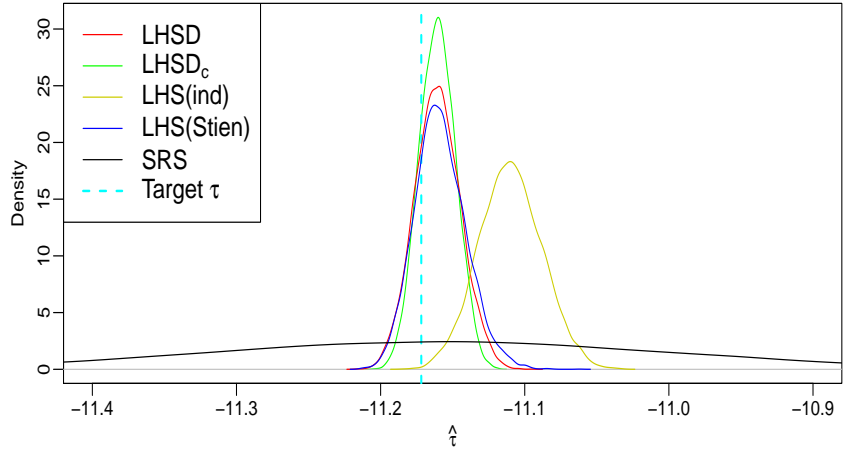}
\caption{The empirical density plot of $\hat{\tau}$ using five different sampling methods, where samples are drawn from the river flood inundation model.}
\label{river_fig2}
\end{figure}

The biases and MSE of the estimators for all pairwise correlation coefficients are also computed  using the five sampling methods (see Table \ref{tab:river_output2}). The true correlation coefficients for the first three pairs in that table, i.e. $(Q,K_s)$,  $(Z_v,Z_m)$, and $(L,B)$, are non-zero.  The last column for both bias and MSE combines all other pairs where the variables are pairwise independent, and we reported the maximum absolute bias and MSE for those pairs.  It shows that all methods except LHS(ind) estimate the correlation coefficient efficiently. The LHSD and LHSD$_c$ methods gives smaller bias and MSE for the  correlation coefficient in comparison to the ranked based LHS method. However, combining other results, we observe that the centered LHSD produces the best samples to estimate the output mean.


\begin{table}[h]
    \centering
    \setlength\tabcolsep{1.5pt} 
    \begin{tabular}{l|cccc|cccc}
\multirow{2}{*}{Methods}    &     \multicolumn{4}{c|}{Bias}   &   \multicolumn{4}{c}{MSE}   \\
\cline{2-9} & $(Q,K_s)$ & $(Z_v,Z_m)$ & $(L,B)$ & $\max(|\rm{Other}|)$ & $(Q,K_s)$ & $(Z_v,Z_m)$ & $(L,B)$ & $\max(|\rm{Other}|)$\\
         \hline
LHSD        & -0.026 & -0.006 & -0.009  &   0.003  & 0.019 & 0.027 & 0.027 &  0.036\\
$\rm{LHSD_c}$    & -0.023 & -0.009 & -0.011  &   0.003  & 0.018 & 0.027 & 0.027  &  0.036\\ 
LHS(ind)    & -0.502 & -0.297 & -0.300  &   0.003  & 0.286 & 0.123 & 0.125  &  0.036\\
LHS(Stien)  & -0.038 & -0.015 & -0.013  &   0.002  & 0.026 & 0.030 & 0.030 &  0.036\\
SRS         & -0.017 & -0.005 & -0.004  &   0.002  & 0.023 & 0.029 & 0.029 &  0.036\\
\hline
     \end{tabular}
    \caption{The biases and MSEs of the correlation coefficients for different pairs of variables, where samples are drawn from the river flood inundation model. }
    \label{tab:river_output2}
\end{table}

\section{Conclusion}
In this paper, we have presented some new sampling methods, based on Latin hypercube sampling, to estimate the mean of the output variable when the inputs are treated as random variables with dependent components. It is theoretically proved that the asymptotic variance of the estimator is smaller than that of the estimator based on the SRS. The simulation results show that our estimators outperform the SRS, and also give smaller mean squared errors than the existing LHS methods. The most advantage of our method is that it retains the exact joint distribution of the input variables. To the best of our knowledge, there does not exist any other sampling scheme based on the LHS that satisfies this property in case of dependent inputs.  

\bibliography{LHSD_Reference}

\section*{Appendix}
\begin{proof}[Proof of Theorem  \ref{theorem:var_lhsd}.]
Suppose  $\mathbf{Z}_1, \mathbf{Z}_2, \cdots, \mathbf{Z}_N$ form a LHS from $\mathbf{Z}$,  where the components of $\mathbf{Z}$ are i.i.d.  $U(0,1)$ variables. Then, using the inverse transformation in Equation (\ref{inv_transformation}) we get
$$\hat{\tau} = \frac{1}{N} \sum_{i=1}^N h({\mathbf{X}_j}) =  \frac{1}{N} \sum_{i=1}^N h^*(\mathbf{Z}_j),$$
where  $\mathbf{X}_1, \mathbf{X}_2, \cdots, \mathbf{X}_N$ form a LHSD  from $\mathbf{X}$ having a target distribution $F$. For $0\leq y_1, y_2 < 1$, we define
$$ G_N(y_1, y_2) =\left\{
\begin{array}{ll}
1, & \mbox{ if } [Ny_1] = [Ny_2]\\
0, & \mbox{ otherwise,}
\end{array}
\right.
$$
where $[y]$ denotes the greatest integer less than or equal to $y$. Following \cite{mckay1},  the joint density of $\mathbf{Z}_1$ and $\mathbf{Z}_2$ can be written as
$$f(\mathbf{z}_1, \mathbf{z}_2) = \left(\frac{N}{N-1}\right)^K \prod_{k=1}^K (1 - G_N(z_{1k}, z_{2k})),$$
where $\mathbf{z}_j = (z_{j1}, z_{j2}, \cdots, z_{jK})^T$ for $j=1,2$. If $A^K = \{{\mathbf{z}} \in \mathbb{R}^K : ||{\mathbf{z}}|| \leq 1 \}$, then
\begin{align*}
    \rm{Cov}(h^*(\mathbf{Z}_1), h^*(\mathbf{Z}_2)) = \left(\frac{N}{N-1}\right)^K  \int_{A^K} \int_{A^K} h^*(\mathbf{z}_1) h^*(\mathbf{z}_2) \prod_{k=1}^K (1 - G_N(z_{1k}, z_{2k})) d\mathbf{z}_1 d\mathbf{z}_2 - \tau^2.
   \end{align*}
Using $\left(\frac{N}{N-1}\right)^K = 1 + \frac{K}{N} + O\left(\frac{1}{N^2}\right)$ and keeping only the first order terms in the expansion of the product, we get 
\begin{align}
      &\rm{Cov}(h^*(\mathbf{Z}_1), h^*(\mathbf{Z}_2))\nonumber\\
      & =  \left(\frac{N}{N-1}\right)^K  \int_{A^K} \int_{A^K} h^*(\mathbf{z}_1), h^*(\mathbf{z}_2)  d\mathbf{z}_1 d\mathbf{z}_2 \nonumber\\
     & \ \ \ \ -  \left(\frac{N}{N-1}\right)^K \sum_{k=1}^K \int_{A^K} h^*(\mathbf{z}_1) h^*(\mathbf{z}_2)  G_N(z_{1k}, z_{2k}) d\mathbf{z}_1 d\mathbf{z}_2 - \tau^2 + O\left(\frac{1}{N^2}\right) \nonumber\\
      &=  \left(\frac{K}{N} + O\left(\frac{1}{N^2}\right)\right) \tau^2 - \left(1 +O\left(\frac{1}{N}\right)\right) \sum_{k=1}^K \int_{A^K} \int_{A^K} h^*(\mathbf{z}_1), h^*(\mathbf{z}_2)  G_N(z_{1k}, z_{2k}) d\mathbf{z}_1 d\mathbf{z}_2 \nonumber\\ 
      & \ \ \ \ + O\left(\frac{1}{N^2}\right) .
     \label{cov4}
    \end{align}
Define $I_j = [(j-1)/N, j/N]$. Using Equation (\ref{def:alpha}) we have
\begin{align}
      \int_{A^K} \int_{A^K} & h^*(\mathbf{z}_1), h^*(\mathbf{z}_2)  G_N(z_{1k}, z_{2k}) d\mathbf{z}_1 d\mathbf{z}_2  
      \nonumber\\ & = \int_{A^1} \int_{A^1} (\alpha_k(z_{1k}) - \tau ) (\alpha_k(z_{2k}) - \tau)  G_N(z_{1k}, z_{2k}) dz_{1k} dz_{2k} \nonumber\\
      & = \sum_{j=1}^N \left( \int_{I_j}  (\alpha_k(z_k) - \tau )  dz_k  \right)^2 .
      \label{int1}
    \end{align}
If $\alpha_k^2$ is Riemann integrable, one can show that
\begin{equation}
     N \sum_{j=1}^N \left( \int_{I_j}  (\alpha_k(z_k) - \tau )  dz_k  \right)^2 \longrightarrow \int_0^1  (\alpha_k(z_k) - \tau )^2  dz_k. 
\label{int}
\end{equation}
Combining Equations (\ref{cov4}), (\ref{int1}) and (\ref{int})  we have
\begin{align}
      \rm{Cov}(h^*(\mathbf{Z}_1), h^*(\mathbf{Z}_2)) =  \frac{K}{N} \tau^2 -  \frac{1}{N}\sum_{k=1}^K \int_0^1  (\alpha_k(z_k) - \tau )^2 dz_k + o\left(\frac{1}{N}\right) .
     \label{cov3}
    \end{align}
Therefore
\begin{align*}
{\rm{Var_{LHSD}}}(\hat{\tau}) & = \frac{1}{N} \rm{Var}(h^*(\mathbf{Z}_1)) +  \frac{N-1}{N}  \rm{Cov}(h^*(\mathbf{Z}_1), h^*(\mathbf{Z}_2)) \nonumber\\
& = \frac{1}{N} \left\{ \int_{A^K} h^{*2}(\mathbf{z}) d\mathbf{z} -\tau^2 + K \tau^2 -  \sum_{k=1}^K \int_0^1  (\alpha_k(z_k) - \tau )^2 dz_k \right\} + o\left(\frac{1}{N}\right)\nonumber\\
& = \frac{1}{N} \left\{ \int_{A^K} h^{*2}(\mathbf{z}) d\mathbf{z}  - \tau^2 -  \sum_{k=1}^K \int_0^1  \alpha_k^2(z_k)  dz_k \right\} + o\left(\frac{1}{N}\right)\nonumber\\
& = \frac{1}{N}  \int_{A^K} \left\{ h^*(\mathbf{z})   - \tau -  \sum_{k=1}^K   \alpha_k(z_k)   \right\}^2 d\mathbf{z} + o\left(\frac{1}{N}\right)\nonumber\\
& = \frac{1}{N} \int_{A^K} r^2({\mathbf{z}}) d{\mathbf{z}} + o\left(\frac{1}{N}\right).
\label{var_LHS1}
\end{align*}
\end{proof}

\begin{proof}[Proof of Theorem  \ref{theorem:asymp_dist}.]
From Theorem \ref{theorem:var_lhsd} we find that the asymptotic variance of $N^{1/2} (\hat{\tau} - \tau )$ is $\int_{A^K} r^2({\mathbf{z}}) d{\mathbf{z}}$. As $E(\hat{\tau}) = \tau$, it is enough to prove that the asymptotic distribution of $N^{1/2} (\hat{\tau} - \tau )$ is normal. We write
\begin{equation}
    N^{1/2} (\hat{\tau} - \tau ) = N^{1/2} \left(\sum_{k=1}^K \bar{\alpha}_k + \bar{R} \right),
\end{equation}
where $$\bar{\alpha}_k = \frac{1}{N} \sum_{j=1}^N \alpha_k (Z_{jk}) \mbox{ and } \bar{R} = \frac{1}{N} \sum_{j=1}^N r(\mathbf{Z}_i).
$$
Theorem \ref{theorem:var_lhsd} shows that ${\rm{Var_{LHSD}}}(N^{1/2}\bar{\alpha}_k) = o(1/N)$. Since $E(\bar{\alpha}_k) = \int_0^1 \alpha_k(z_k) dz_k = 0$, it shows that $N^{1/2}\bar{\alpha}_k = o_p(1/N)$. Finally, the theorem follows from Lemma 2 of \cite{Owen} that proves  $N^{1/2}\bar{R}$ is asymptotically normal under the LHS.
\end{proof}
\end{document}